\definecolor{darkblue}{rgb}{0, 0, 0.5}
\crefname{thm}{Theorem}{Theorems}
\crefname{assumption}{Assumption}{Assumptions}
\newcommand{\E}{\operatorname{\mathbb{E}}}
\newcommand{\Prb}{\operatorname{\mathbb{P}}}
\newcommand{\R}{\mathbb{R}}
\newcommand{\G}{\mathbb{G}}
\newcommand{\sumin}{\sum_{i=1}^n}
\newcommand{\asto}{\overset{\text{a.s.}}{\to}}
\newcommand{\weakto}{\rightsquigarrow}
\newtheorem{exam}{Example}
\newtheorem{lem}{Lemma}
\newtheorem{theorem}{Theorem}
\newtheorem{assumption}{Assumption}
\title{\textbf{Closed-form estimation and inference\\for panels with attrition and refreshment samples}}
\date{\today}
\author[1]{\textsc{Grigory Franguridi}}
\author[2]{\textsc{Lidia Kosenkova}}
\affil[1]{Center for Economic and Social Research, University of Southern California \vspace{1ex}}
\affil[2]{Department of Economics, University of Virginia}
\begin{document}
\maketitle

\begin{abstract}
    \onehalfspacing
    It has long been established that, if a panel dataset suffers from attrition, auxiliary (refreshment) sampling restores full identification under additional assumptions that still allow for nontrivial attrition mechanisms.
    Such identification results rely on implausible assumptions about the attrition process or lead to theoretically and computationally challenging estimation procedures. 
    We propose an alternative identifying assumption that, despite its nonparametric nature, suggests a simple estimation algorithm based on a transformation of the empirical cumulative distribution function of the data.
    This estimation procedure requires neither tuning parameters nor optimization in the first step, i.e., it has a \emph{closed form}.
    We prove that our estimator is consistent and asymptotically normal and demonstrate its good performance in simulations.
    We provide an empirical illustration with income data from the Understanding America Study.

\bigskip

\noindent \textbf{JEL Classification:} C23

\bigskip

\noindent \textbf{Keywords:} panel data, survey, attrition, selection, refreshment sample, two-step GMM

\end{abstract}

\newpage

\onehalfspacing
\frenchspacing

\section{Introduction}\label{sec:intro}

Attrition in panel data is a widespread phenomenon. Units tracked over time may drop out of the sample for several reasons, including self-selection, non-survival, and increasing survey burden.
When attrition is nonrandom, the ensuing bias in structural estimates is difficult to handle both theoretically and computationally.
Complete debiasing is impossible without either strong (parametric or nonparametric) assumptions on the attrition process, such as the selection on observables, or the availability of auxiliary data.
The latter often comes in the form of \emph{refreshment samples}, i.e., extra random samples from the population in the drop-out period, see, e.g., \citet{deng2013handling,taylor2020evaluating,watson2021refreshment}.
Refreshment samples are available in many widely used survey panels such as the RAND Malaysian Family Life Survey (MFLS), the Medical Expenditure Panel Survey (MEPS), and the Current Population Survey (CPS).
In a seminal paper, \citet{hirano2001combining} proved that, with refreshment samples, identification is restored under a quasi-separability assumption on the attrition process, which they call \emph{additive nonignorability}.
Subsequent papers developed estimation and inference procedures relying on assumptions of varying strength.
For example, \citet{bhattacharya2008inference} proposed a sieve-based semiparametric, asymptotically normal estimator under additive nonignorability, while \citet{hoonhout2019nonignorable} derived a two-step GMM procedure for the case of multi-wave panels.
Recently, \citet{franguridi2024estimation} suggested a computationally tractable estimation procedure using iterative proportional fitting (raking), and \citet{franguridi2024robust} provided a debiased version of the raking estimator along with an influence function-based estimator of its asymptotic variance.
Alternative approaches to estimation and inference with refreshment samples or other auxiliary data include \citet{hellerstein1999imposing,nevo2003using,d2010new,si2015semi,sadinle2019sequentially,franguridi2025inference,franguridi2025generalized}, among others.

The aforementioned estimation techniques either require strong assumptions on the attrition process or lead to computationally challenging multistep estimators.
Our main contribution is an estimation procedure that relies on transformations of the empirical cumulative distribution function of the data,  does not require either functional optimization or the choice of tuning parameters, and works for both discrete and continuous data; we call this procedure \emph{closed-form}.
This procedure is particularly attractive in the presence of high-dimensional covariates due to its immunity to the curse of dimensionality and admits bootstrap inference.
These advantages come at the cost of a slightly nonstandard quasi-separability assumption on the attrition process.
We consider it a fair price for the simplicity of theoretical analysis, computational feasibility, and absence of tuning parameters.

The rest of the paper is organized as follows.
\Cref{sec:framework} introduces the model and derives the key identification result.
\Cref{sec:estimation} presents our closed-form estimator, its asymptotic analysis, and the construction of confidence intervals.
\Cref{sec:mc} illustrates the performance of the estimator and the confidence intervals in Monte Carlo simulations.
\Cref{sec:empirical} presents an empirical application.
\Cref{sec:conclusion} concludes.


\section{Framework and identification}\label{sec:framework}

We follow the setup of \citet{hirano2001combining}.

Let $Z_{it} = (X_{it}',Y_{it})' \in \R^{d}$ denote the stacked vector of covariates and outcomes for unit $i$ at time $t = 1,2$.
We observe $Z_{it}$ for a random sample of units $i=1,\dots,n_1$ at time $t=1$.
There is no initial nonresponse.
However, at time $t=2$, the units may drop out of the sample.
Let $W_i$ be the indicator of unit $i$ staying in the sample and suppose, without loss of generality, that the stayers are units $i=1,\dots,n_2$.
In addition to this incomplete panel, we observe an auxiliary (refreshment) sample $Z_{i2}^r$, $i=1,\dots,n_r$, from the unconditional distribution of $Z_{i2}$.
Let $F$ be the cumulative distribution function (CDF) of $Z = (Z_1,Z_2)$ (where we drop the unit subscript).
We are interested in estimating and conducting inference for a parameter $\theta \in \Theta \subset \R^{d_\theta}$ defined by the moment conditions
\begin{align}
    \E_F m(Z_1,Z_2;\theta) = \int m(z_1,z_2,\theta) \, dF(z_1,z_2) = 0,\label{eq:mom-cond-1}
\end{align}
where $m: \R^{2d} \times \Theta \to \R^{d_m}$ is a known moment function.
We assume for simplicity that $d_m = d_\theta$, but our results can be generalized to an arbitrary number of moments.

This framework is very general and includes the estimands of interest in both linear and nonlinear panel data models.
Although we focus on the classical case of the target parameter defined by the moment conditions \eqref{eq:mom-cond-1}, the results of this paper hold as long as $\theta=\theta(F)$ is a Hadamard differentiable functional of $F$.
We illustrate the broad applicability of our setup with a series of examples.

\begin{exam}[linear regression with two-way fixed effects]
    Consider the standard linear regression with two-way fixed effects,
\begin{align*}
    y_{it} = \alpha_i + f_t + x_{it}'\theta + \varepsilon_{it}, \quad i=1,\dots, n, \,\, t=1,2.
\end{align*}
A huge literature is devoted to the identification of the slope coefficients $\theta$ under various assumptions.
When the covariates are strictly exogenous, one can use the within transformation $\ddot{\zeta}_{it} = \zeta_{it} - \frac{1}{n} \sum_{i=1}^n \zeta_{it} - \frac{1}{2}(\zeta_{i1}+\zeta_{i2}) + \frac{1}{2n} \sum_{i=1}^n (\zeta_{i1}+\zeta_{i2})$ to identify $\theta$ as the OLS coefficient in the transformed regression, 
\begin{align*}
    \E \left( \ddot{y}_{it} - \ddot{x}_{it}'\theta \right) \ddot{x}_{it} = 0.
\end{align*}
This is the moment condition of the form \eqref{eq:mom-cond-1}, and hence can be estimated under attrition when a refreshment sample is available in the second period.

When $x_{it}$ contains lagged outcomes, at least three periods are needed to estimate the above model (for example, via the Arellano-Bond GMM).
Our methodology can be extended to handle more than two periods along the lines of \citet{hoonhout2019nonignorable}.
\end{exam}

\begin{exam}[difference-in-differences]

Consider the classical framework in which outcomes $y_{it}$ are tracked for individuals $i$ over periods $t=1,2$, and some individuals are treated in period $2$, which is denoted by $d_{i2}=1$. The standard diff-in-diff estimand is
\begin{align*}
    DID := \E \left[ y_{i2}-y_{i1} \,\vert\, d_{i2}=1 \right] - \E \left[ y_{i2}-y_{i1} \,\vert\, d_{i2}=0 \right],
\end{align*}
which, under the parallel trends assumption, is equal to the average treatment effect on the treated, ATT. Using the potential outcome notation,
\begin{align*}
    ATT &:= \E[y_{i2}(1) - y_{i2}(0) \,\vert\, d_{i2}=1 ] = DID.
\end{align*}
When there is attrition in the second period, there are two cases depending on whether the treatment status $d_{i2}$ is observable for all units, including those who drop out in the second period.

In the first case, when the treatment status $d_{i2}$ is observed for all units, the model is identified with period-1 and refreshment samples only, without the need for the stayers in period 2 (i.e., with repeated cross-sections only).

In the second case, when the treatment status $d_{i2}$ is only observed for stayers (which is common in panel surveys), $\E[y_{i1}|d_{i2} = 0]$ and $\E[y_{i1}|d_{i2} = 1]$ cannot be calculated using only marginal distributions in two time periods. Instead, the joint distribution of outcome and treatment status over the two periods is required.
To see it formally, rewrite
\[
ATT = \E\left[ \frac{d_{i2}(y_{i2}-y_{i1})}{q_i} - \frac{(1-d_{i2})(y_{i2}-y_{i1})}{1-q_i} \right],
\]
where $q_i=P(d_{i2}=1)$ is the propensity score.
Since ATT contains expressions of the form $d_{i2} y_{i1}$ and $(1-d_{i2})y_{i1}$, it is not additively separable in the two periods, and hence requires panel data for identification.

\end{exam}

\begin{exam}[quantile treatment effects]
Consider the panel data with $T = 3$ periods, where the treatment only occurs in the last period.
The data are a random sample from $(y_1,y_2,y_3,d)$, where $y_t$ is the observed outcome at time $t$ and $d \in \{0,1\}$ is the indicator of treatment.
Suppose that the object of interest is the quantile treatment effect on the treated
\begin{align*}
QTT(\tau) = F_{y_3(1)|d=1}^{-1}(\tau) - F_{y_3(0)|d=1}^{-1}(\tau).
\end{align*}
The first term is identified directly from the data. For the second term, \citet{callaway2019quantile} show that, under their assumptions of \emph{distributional parallel trends} and \emph{copula stability},
\begin{align*}
F_{y_3(0)|d=1}(y) = \Prb & \left[ F_{\Delta y_3|d=0}^{-1}\left( F_{\Delta y_{2}|d=1}(\Delta y_{2}) \right) \le y - F_{y_{2}|d=1}^{-1} \left(F_{y_{1}|d=1}(y_{1}) \right) \,\vert\, d=1\right].
\end{align*}
The distributions $F_{\Delta y_3|d=0}$, $F_{\Delta y_2|d=1}$, $F_{y_2|d=1}$, and $F_{y_1|d=1}$ are identified directly from the data, and hence the second term in the QTT is also identified.
Therefore, we can write the QTT as a (complicated) nonlinear functional of the joint distribution of $(y_1,y_2,y_3,d)$.

Suppose some units may drop out from the sample in period $t=3$, but a refreshment sample is available.
Then the QTT model fits our framework with the first two periods combined into one, i.e., with $z_1=(y_1,y_2)$ and $z_2 = (y_3,d)$.
\end{exam}

Now, we introduce our main identifying assumption for point identification of the joint distribution of $Z_1$ and $Z_2$. Let $F^w$, $F_1$, and $F_2$ be the CDFs of $(Z_1,Z_2)|W=1$, $Z_1$, and $Z_2$, respectively.
These distributions can be readily estimated from the balanced panel (retaining stayers only), the first-period sample, and the refreshment sample, respectively.
The key identity relating the target distribution $F$ to the data is
\begin{align}
F(z_1,z_2) = \frac{\Prb(W=1)}{\Prb(W=1|Z_1\le z_1,Z_2\le z_2)} F^w(z_1,z_2). \label{eq:F-via-Fw}
\end{align}
The weight $\Prb(W=1) / \Prb(W=1|Z_1\le z_1,Z_2\le z_2)$ is not identified without further restrictions.
To see why, notice that this object is an unrestricted function of the joint distribution of $Z_1,Z_2$, while the information available for its identification is only the two marginal CDFs $F_1$ and $F_2$.
To close the gap, we impose a separability assumption on the weight.\footnote{If no assumptions are imposed on the attrition process, any distribution with marginals $F_1$ and $F_2$ is consistent with the data. Hence, in most cases, the partial identification approach will not lead to informative bounds on the structural parameter.}

\begin{assumption}\label{a:identif}
$\Prb(W=1|Z_1\le z_1,Z_2\le z_2) = G(k_1(z_1)+k_2(z_2))$ for a known, differentiable, strictly increasing function $G: \R \to (0,\infty)$ and some unknown functions $k_1: \R^{d} \to \R$, $k_2:\R^{d} \to \R$.\footnote{We believe that choosing the link function $G$
is not a critical decision in practice, as documented, e.g., in \citet{little1991models,hirano2001combining}. However, formally comparing the classes of selection mechanisms captured by \cref{a:identif} when varying the link function is outside of the scope of this paper.}

\end{assumption}

This assumption is compatible with the missing-completely-at-random condition ($k_1=k_2=const$).
It leads to an explicit identification of $k_1$ and $k_2$, see \Cref{thm:identification} below.
Besides, it neither implies nor is implied by the analogous assumption on $\Prb(W=1|Z_1=z,Z_2=z)$ in \citet{hirano2001combining}, viz.
\begin{align}
    \Prb(W=1|Z_1=z_1,Z_2=z_2) = \tilde G(\tilde k_1(z_1) + \tilde k_2(z_2)) \label{eq:HIRR-assum}
\end{align}
for some known link function $\tilde G$ and unrestricted functions $\tilde k_1,\tilde k_2$. The following example illustrates this point.

\begin{exam}
  \textnormal{  Suppose $Z_1,Z_2 \in [0, 1]^2$ with density $f(z_1, z_2) = z_1 + z_2 $ and the conditional probability of staying is given by
    \begin{align*}
        \Prb(W=1|Z_1=z_1,Z_2=z_2) = az_1^2 + b z_1 z_2 + a z_2^2.
    \end{align*}
    for some constants $a,b$.
    Using the formula
    \begin{align*}
        \Prb(W=1|Z_1\le z_1,Z_2\le z_2)& = \frac{1}{F(z_1,z_2)} \int_{-\infty}^{z_1} \int_{-\infty}^{z_2} \Prb(W=1|Z_1=t_1,Z_2=t_2) f(t_1,t_2) \, dt_1 dt_2=\\
        & =\frac{3az_1^3+2(a+b)z_1^2z_2+2(a+b)z_1z_2^2+3az_z^3}{6(z_1+z_2)},
    \end{align*}
    it is easy to show that (a) when $a=2/11$, $b=7/11$, \Cref{a:identif} holds with $G(x)=x^2/11$, but \eqref{eq:HIRR-assum} does not hold; (b) when $a=1/2$, $b=0$, \Cref{a:identif} does not hold, while \eqref{eq:HIRR-assum} holds; (c) when $a=0$, $b=1$, both \Cref{a:identif} and \eqref{eq:HIRR-assum} hold.
    Put differently, even when the alternative assumption \eqref{eq:HIRR-assum} is imposed, our assumption is still valid for a class of DGPs with nontrivial attrition processes.}
\end{exam}

The next example shows that the left-hand side of \eqref{eq:F-via-Fw} can be a valid CDF under \Cref{a:identif}.

\begin{exam}
\textnormal{    Suppose $Z_1,Z_2$ are scalar random variables and $G(x)=\exp(x)$. We have
    \[
    F(z_1,z_2) \propto \frac{F^w(z_1,z_2)}{\exp(k_1(z_1)+k_2(z_2))},
    \]
    or, taking the mixed partial derivative,
    \[
    f(z_1,z_2) \propto \frac{(1-k_1'(z_1)D_{z_1}F^w )(1- k_2'(z_2)D_{z_2}F^w ) + (f^w-1)}{\exp(k_1(z_1)+k_2(z_2))}.
    \]
    This is a proper density if the numerator is nonnegative. Since $D_{z_1}F^w$, $D_{z_2} F^w$, and $f^w$ are nonnegative functions, a simple sufficient condition is for \emph{$k_1$ and $k_2$ to be decreasing}.}
    
  \textnormal{  For a more concrete example, consider $Z_1,Z_2 |W=1 \sim \text{ iid U}[0,1]$ and $k_1(z_1)=a + c_1 z_1$, $k_2(z_2)=b+c_2z_2$. Then the aforementioned condition holds whenever $c_1,c_2 \le 1$. Normalization is achieved by an appropriate choice of $a$ and $b$.}
\end{exam}

\Cref{a:identif} leads to a \emph{closed-form} identification result for the unknown function $k_1(z_1)+k_2(z_2)$ which informs an estimation procedure based on empirical CDFs.
This is in contrast to \citet{hirano2001combining} that identifies $\tilde k_1(z_1)+\tilde k_2(z_2)$ implicitly as a solution to a nonlinear functional optimization problem which makes the estimation procedure hard to analyze and implement, see \citet{franguridi2024robust}.

\begin{theorem}\label{thm:identification}
Under \Cref{a:identif}, the target CDF can be written as
\begin{align}
F(z_1,z_2) = \Phi\left(\Prb(W=1),F_1(z_1),F_2(z_2),F_1^w(z_1),F_2^w(z_2),F^w(z_1,z_2) \right), \label{eq:F-via-Phi}
\end{align}
where the function $\Phi$ is defined by
\begin{align}
    \Phi(p,F_1,F_2,F_1^w,F_2^w,F^w) = \frac{p F^w}{G\left(G^{-1}\left(\frac{p F_1^w}{F_1} \right) + G^{-1}\left(\frac{p F_2^w}{F_2} \right) - G^{-1}\left(p\right) \right)}.
\end{align}
\end{theorem}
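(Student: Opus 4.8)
The plan is to reduce everything to identifying the ``cumulative'' staying probability $\Prb(W=1\mid Z_1\le z_1, Z_2\le z_2)$, since \eqref{eq:F-via-Fw} then delivers $F$ immediately. Under \Cref{a:identif} this amounts to identifying the scalar $k_1(z_1)+k_2(z_2)$, and the key observation is that it can be recovered from its two ``marginal'' counterparts by an inclusion--exclusion-type identity, precisely because $k_1$ and $k_2$ enter additively through the known link $G$.

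Concretely, I would first marginalize \eqref{eq:F-via-Fw}: applying Bayes' rule to the event $\{Z_1\le z_1\}$ (equivalently, sending $z_2\to\infty$) gives
\begin{align*}
\Prb(W=1\mid Z_1\le z_1) = \frac{\Prb(W=1)\,F_1^w(z_1)}{F_1(z_1)},
\end{align*}
and symmetrically $\Prb(W=1\mid Z_2\le z_2) = \Prb(W=1)F_2^w(z_2)/F_2(z_2)$; write $p=\Prb(W=1)$. Next I would use \Cref{a:identif}: since $\{Z_2\le z_2\}$ increases to a full-probability event as $z_2\to\infty$, $G(k_1(z_1)+k_2(z_2))\to\Prb(W=1\mid Z_1\le z_1)$, so $k_2(z_2)$ converges to a finite limit $\bar k_2$ that does not depend on $z_1$, and $G^{-1}(\Prb(W=1\mid Z_1\le z_1)) = k_1(z_1)+\bar k_2$; symmetrically $G^{-1}(\Prb(W=1\mid Z_2\le z_2))=\bar k_1+k_2(z_2)$ and $G^{-1}(p)=\bar k_1+\bar k_2$. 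Adding the first two relations and subtracting the third yields
\begin{align*}
k_1(z_1)+k_2(z_2) = G^{-1}\!\left(\frac{pF_1^w(z_1)}{F_1(z_1)}\right) + G^{-1}\!\left(\frac{pF_2^w(z_2)}{F_2(z_2)}\right) - G^{-1}(p).
\end{align*}
Applying $G$ and plugging the resulting expression for $\Prb(W=1\mid Z_1\le z_1, Z_2\le z_2)$ back into \eqref{eq:F-via-Fw} gives precisely $F(z_1,z_2)=\Phi\big(p,F_1(z_1),F_2(z_2),F_1^w(z_1),F_2^w(z_2),F^w(z_1,z_2)\big)$ with $\Phi$ as stated.

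The one step that needs care, and which I expect to be the main obstacle, is the limiting argument: I must ensure the marginal staying probabilities and $p$ lie strictly inside the range of $G$, so that $G^{-1}$ applies and the limits $\bar k_1,\bar k_2$ are finite and well-defined (in particular, so that the relation $G^{-1}(\Prb(W=1\mid Z_1\le z_1))=k_1(z_1)+\bar k_2$ does not degenerate). This holds under mild positivity/boundedness of the staying probabilities, which I would either impose as a regularity condition or note follows from \Cref{a:identif} together with the maintained assumptions; the remainder of the argument is bookkeeping with Bayes' rule and the monotone continuity of probability measures.
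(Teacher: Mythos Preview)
Your proposal is correct and follows essentially the same route as the paper: both rewrite \eqref{eq:F-via-Fw} as $G(k_1(z_1)+k_2(z_2))=pF^w(z_1,z_2)/F(z_1,z_2)$, evaluate at $z_1=\infty$ and/or $z_2=\infty$ to obtain three identities, and combine them to recover $k_1(z_1)+k_2(z_2)$ in terms of $p,F_1,F_2,F_1^w,F_2^w$. The only cosmetic difference is that the paper normalizes $k_2(\infty)=0$ and solves for $k_1,k_2$ separately, whereas you add the two marginal identities and subtract the ``double'' one directly; your extra care about the finiteness of $\bar k_1,\bar k_2$ is a point the paper handles by simply plugging in $\infty$.
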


\begin{proof}
Denote $p=\Prb(W=1)$.
Equation \eqref{eq:F-via-Fw} implies
\begin{align*}
G(k_1(z_1)+k_2(z_2)) = \frac{pF(z_1,z_2|W=1)}{F(z_1,z_2)}.
\end{align*}
Since $G$ is strictly increasing, the inverse function $G^{-1}$ exists.
Plugging in $z_1=\infty$ and/or $z_2=\infty$, we get
\begin{align*}
k_1(z_1) + k_2(\infty) &= G^{-1}\left(\frac{p F_1(z_1|W=1)}{F_1(z_1)} \right), \\
k_1(\infty) + k_2(z_2) &= G^{-1}\left(\frac{p F_2(z_2|W=1)}{F_2(z_2)} \right), \\
k_1(\infty)+k_2(\infty) &= G^{-1}\left(p\right).
\end{align*}
Normalizing (say) $k_2(\infty)=0$, we obtain, for all $z_1,z_2$,
\begin{align*}
k_1(z_1) &= G^{-1}\left(\frac{p F_1(z_1|W=1)}{F_1(z_1)} \right), \\
k_2(z_2) &= G^{-1}\left(\frac{p F_2(z_2|W=1)}{F_2(z_2)} \right) - G^{-1}\left(p\right).
\end{align*}
Substituting these expressions into \eqref{eq:F-via-Fw} completes the proof.
\end{proof}

\section{Estimation and inference}\label{sec:estimation}

\Cref{thm:identification} suggests a two-step estimation procedure.
In the first step, we estimate $F$ by plugging in the empirical CDF estimators of $F_1,F_2,F_1^w,F_2^w,F^w$.
In the second step, we substitute this estimator into the moment conditions \eqref{eq:mom-cond-1} and compute $\hat\theta$ that sets these moment conditions to zero.

To describe this procedure formally, define
\begin{align}
&\hat p = \frac{n_2}{n_1}, \quad
\hat F_1(z_1) = \frac{1}{n_1} \sum_{i \in n_1} 1(z_{1i} \le z_1), \quad
\hat F_1^w(z_1) = \frac{1}{n_2} \sum_{i \in n_2} 1(z_{1i} \le z_1), \\
&\hat F_2(z_2) = \frac{1}{n_r} \sum_{i \in n_r} 1(z_{2i}^r \le z_2), \quad
\hat F_2^w(z_2) = \frac{1}{n_2} \sum_{i \in n_2} 1(z_{2i} \le z_2), \\
&\hat F^w(z_1,z_2) = \frac{1}{n_2} \sum_{i \in n_2} 1(z_{1i} \le z_1, z_{2i} \le z_2).
\end{align}
Then the plug-in estimator of $F$ is
\begin{align*}
    \hat F(z_1,z_2) = \Phi\left( \hat p, \hat F_1(z_1),\hat F_2(z_2),\hat F_1^w(z_1),\hat F_2^w(z_2), \hat F^w(z_1,z_2) \right).
\end{align*}
We now use this estimator to calculate the sample analog of the moment condition \eqref{eq:mom-cond-1} as the integral with respect to the distribution induced by $\hat F$.
Because its arguments are empirical CDFs, $\hat F$ is a piecewise constant function.
Let $\hat{\mathcal{Z}} \subset \R^{2d}$ be the finite set of its discontinuity points.
In our case,
\[
\hat{\mathcal{Z}} := \hat{\mathcal{Z}}_1 \times (\hat{\mathcal{Z}}_2 \cup \hat{\mathcal{Z}}_2^r),
\]
where $\hat{\mathcal{Z}}_1=\{z_{1i}, \, i\in n_1\}$, $\hat{\mathcal{Z}}_2=\{z_{2i}, \, i\in n_2\}$, and $\hat{\mathcal{Z}}_2^r=\{z_{2i}^r, \, i\in n_r\}$ are the respective empirical supports.
Then the induced ``probability'' (``jump size'') of $\hat F$ at any $\zeta=(\zeta_1,\dots,\zeta_{2d})\in\hat{\mathcal{Z}}$ is\footnote{If $\hat F$ were the empirical CDF corresponding to $n$ distinct points $\hat{\mathcal{Z}}=\{\zeta_1,\dots,\zeta_n\}$, then the formula would yield $\hat f(x_i)=1/n$ for all $i=1,\dots,n$.}
\begin{align}
    \hat f(\zeta) = \sum_{(i_1,\dots,i_{2d}) \in \{0,1\}^{2d}} (-1)^{i_1+\dots+i_{2d}} \hat F\left(\zeta_1+(-1)^{i_1} h_1, \dots, \zeta_{2d}+(-1)^{i_{2d}} h_{2d} \right), \label{eq:f-jump}
\end{align}
where $h_k$ is any positive scalar that is less than the minimal gap between the observations of the $k$-th component of $\zeta$, $0<h_k < \min_{i,j:\, \zeta_{k,i} \neq \zeta_{k,j}}|\zeta_{k,i}-\zeta_{k,j}|$.\footnote{The definition of $\hat f$ does not depend on the choice of $h_k$, and so $h_k$ is not a tuning parameter.}
This is a discrete version of the formula
\begin{align*}
    f(\zeta) = \frac{\partial^{2d}}{\partial \zeta_{1} \cdots \partial \zeta_{2d}} F(\zeta_1,\dots,\zeta_{2d})
\end{align*}
which links a CDF $F$ to its PDF $f$ in case the former is differentiable.
We emphasize, however, that the formula \eqref{eq:f-jump} works regardless of whether the data are discrete or continuous.
Notice that, because $\hat F$ does not have to be a valid empirical CDF, $\hat f$ may take negative values in finite samples.\footnote{Rearrangement of $\hat{F}(z_1,z_2)$ similar to \citet{chernozhukov2009improving} does not guarantee that the rearranged function is a proper CDF.}
This does not affect the asymptotic properties of the resulting estimator.

The complete estimation procedure is as follows.

\textbf{Algorithm}.
\begin{enumerate}
    \item Calculate $\hat p$ and the empirical CDFs $\hat F_1, \hat F_2, \hat F_1^w, \hat F_2^w, \hat F^w$;
    \item Plug in to obtain $\hat F = \Phi(\hat p, \hat F_1, \hat F_2, \hat F_1^w, \hat F_2^w, \hat F^w)$;
    \item Calculate the jump size $\hat f(z_{1},z_{2})$ at points $(z_{1},z_{2}) \in \hat{\mathcal{Z}}$ according to formula \eqref{eq:f-jump};
    \item Set $\hat\theta$ such that\footnote{Another way to implement the second step is to draw a random sample from the discrete distribution defined by (a trimmed version of) $\hat f$ and estimate $\theta$ by the conventional GMM based on this sample.}
    \begin{align}
     \int m(z_1,z_2;\hat\theta) \, d\hat F(z_1,z_2) =  \sum_{(z_1,z_2) \in \hat{\mathcal{Z}}} m(z_{1},z_{2};\hat\theta) \hat f(z_{1},z_{2})=0. \label{estimator}
    \end{align}
\end{enumerate}

This procedure has several advantages.
First, it does not require choosing any tuning parameters despite relying on a semiparametric \Cref{a:identif}.
Second, the first-step estimator has an explicit formula and only depends on empirical CDFs, which makes it suitable for high-dimensional data.
Finally, because the first-step estimator is a smooth transformation of empirical CDFs, which jointly converge to a known Gaussian process by Donsker's theorem, this procedure allows for the delta method bootstrap, as the following theorem shows.

To prove consistency of $\hat\theta$, we follow \citet{newey1994large}.
First, under two different sets of assumptions on $m$, we show convergence in probability of $\int m(z;\theta)\, d\hat{F}(z)$ to $\int m(z;\theta)\, dF(z)$ uniformly over $\theta$.
This is an analog of the uniform law of large numbers (ULLN) with sample averaging replaced by integration with respect to an estimated CDF. 

\begin{assumption}\label{a:theta}
\text{}
    \begin{enumerate}
        \item The parameter $\theta$ belongs to a compact set $\Theta \subset \R^{d_{\theta}}$;
        \item The true value $\theta$ is a unique solution of \eqref{eq:mom-cond-1} in $\Theta$.
    \end{enumerate}
\end{assumption}

\begin{assumption}\label{a:bounded-support}
\text{}
    \item The true distribution $F$ has bounded support.
\end{assumption}

\begin{assumption}\label{a:Lipschitz}
The moment function satisfies the following:
\begin{enumerate}
    \item $m(Z;\theta)$ is a.s. continuous in $\theta\in\Theta$
    \item There exists $M_1>0$ such that $\Vert m(\cdot ;\theta)\Vert_{Lip}\le M_1$ for all $\theta\in\Theta$, where 
    $$\Vert m(\cdot;\theta)\Vert_{Lip}\equiv \sup_{Z\ne\tilde{Z}} \frac{ \Vert m(Z;\theta)-m(\tilde{Z};\theta)\Vert_{1}}{\Vert Z-\tilde{Z}\Vert_{1}}.$$
\end{enumerate}
\end{assumption}
\Cref{a:Lipschitz} imposes smoothness on the moment function $m(Z;\theta)$, viz. $m(Z;\theta)$ is Lipschitz continuous w.r.t. $Z$ for any fixed $\theta\in\Theta$ with a uniformly bounded Lipschitz constant.

\begin{lem}\label{lem:ULLN_Lipschitz}
Suppose \Cref{a:identif,a:theta,a:Lipschitz,a:bounded-support} hold. Then $M(\theta) = \int m(z;\theta) \, dF(z) $ is continuous and
\begin{equation}
    \sup_{\theta\in\Theta}\left\| \int m(z;\theta)d\hat{F}(z) - \int m(z;\theta)dF(z)\right\|\asto 0. \label{eq:ULLN}
\end{equation}

\end{lem}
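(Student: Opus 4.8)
The plan is to prove the two assertions in turn. Continuity of $M(\theta)=\int m(z;\theta)\,dF(z)$ is a dominated-convergence argument: by \Cref{a:bounded-support} the support $\mathcal S$ of $F$ is bounded, by \Cref{a:Lipschitz}(2) every $m(\cdot;\theta)$ is $M_1$-Lipschitz, and by \Cref{a:Lipschitz}(1) we may pick $z^\ast\in\mathcal S$ at which $\theta\mapsto m(z^\ast;\theta)$ is continuous, whence $\bar M:=\sup_{\theta\in\Theta}\sup_{z\in\mathcal S}\|m(z;\theta)\|\le\sup_{\theta\in\Theta}\|m(z^\ast;\theta)\|+M_1\diam(\mathcal S)<\infty$, the first term being finite since $\Theta$ is compact (\Cref{a:theta}). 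For $\theta_k\to\theta$, \Cref{a:Lipschitz}(1) gives $m(z;\theta_k)\to m(z;\theta)$ for $F$-a.e.\ $z$, dominated by the constant $\bar M$; hence $M(\theta_k)\to M(\theta)$ and $M$ is continuous. For \eqref{eq:ULLN} I will use three ingredients: (i) $\|\hat F-F\|_\infty\pto 0$; (ii) a bound on the total variation of the signed ``jump'' measure $d\hat F$, namely $\sum_{\zeta\in\hat{\mathcal{Z}}}|\hat f(\zeta)|=O_p(1)$; and (iii) a step-function approximation of $\{m(\cdot;\theta)\}_{\theta\in\Theta}$ converting (i)--(ii) into the claim.

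For (i): by \Cref{thm:identification} both $\hat F=\Phi(\hat p,\hat F_1,\hat F_2,\hat F_1^w,\hat F_2^w,\hat F^w)$ and $F=\Phi(p,F_1,F_2,F_1^w,F_2^w,F^w)$ are the same map $\Phi$ applied to, respectively, estimated and population arguments; the empirical CDFs converge uniformly to their population counterparts by the multivariate Glivenko--Cantelli theorem and $\hat p\pto p$; and $\Phi$, assembled from $G$, $G^{-1}$, and elementary operations, is uniformly continuous on the compact set of attainable argument values, using that the arguments of $G^{-1}$ equal $\Prb(W=1\mid Z_1\le z_1)$ and $\Prb(W=1\mid Z_2\le z_2)$ and so stay in a compact subinterval of $(0,\infty)$ (this last point requires a little care at the lower boundary of $\mathcal S$, see below). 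Composing gives $\|\hat F-F\|_\infty\pto 0$. For (ii): expanding the $(2d)$-fold mixed difference \eqref{eq:f-jump} of $\Phi(\hat p,\hat F_1,\dots,\hat F^w)$ by the discrete Leibniz/chain rule exhibits $\hat f(\zeta)$ as a finite sum of products of bounded partial derivatives of $\Phi$ with mixed differences of the empirical CDFs; each such empirical factor induces a jump measure of total mass $O_p(1)$---equal to $1$ for a full mixed difference of a genuine empirical CDF, and $O_p(1)$ for the lower-order partial differences that arise---so that $\sum_{\zeta}|\hat f(\zeta)|=O_p(1)$ and hence $d\hat F$ has total variation bounded in probability.

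For (iii): fix $\epsilon>0$, cover $\mathcal S$ (enlarged to a box containing $\hat{\mathcal{Z}}$) by a finite grid of cubes of side $\epsilon$, and let $m_\epsilon(\cdot;\theta)$ equal, on each cube, the value of $m(\cdot;\theta)$ at that cube's lower corner, so that $\sup_{\theta\in\Theta}\|m(\cdot;\theta)-m_\epsilon(\cdot;\theta)\|_\infty\le 2dM_1\epsilon$ by \Cref{a:Lipschitz}(2). Then
\begin{align*}
\int m(z;\theta)\,d\hat F(z)-\int m(z;\theta)\,dF(z)
&=\int\bigl(m(z;\theta)-m_\epsilon(z;\theta)\bigr)\,d\hat F(z)\\
&\quad+\Bigl(\int m_\epsilon(z;\theta)\,d\hat F(z)-\int m_\epsilon(z;\theta)\,dF(z)\Bigr)\\
&\quad+\int\bigl(m_\epsilon(z;\theta)-m(z;\theta)\bigr)\,dF(z).
\end{align*}
The first term is at most $2dM_1\epsilon\sum_\zeta|\hat f(\zeta)|=O_p(\epsilon)$ by (ii), the third at most $2dM_1\epsilon$ since $F$ is a probability measure, both uniformly in $\theta$. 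The middle term is, for fixed $\epsilon$, a finite linear combination over the cubes (with coefficients bounded by $\bar M$) of integrals $\int\mathbf{1}_{\text{cube}}\,d(\hat F-F)$, each an inclusion--exclusion of $(\hat F-F)$ over the $2^{2d}$ corners of the cube and so bounded by $2^{2d}\|\hat F-F\|_\infty$; hence the middle term is $\le\bar M\,2^{2d}(\#\text{cubes})\|\hat F-F\|_\infty\pto 0$ by (i), uniformly in $\theta$. Taking $\sup_{\theta\in\Theta}$, then $n\to\infty$, then $\epsilon\downarrow 0$ yields \eqref{eq:ULLN}.

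The step I expect to be the main obstacle is (ii): controlling $\sum_\zeta|\hat f(\zeta)|$ is exactly where the non-monotonicity of $\hat F$---and the resulting signed character of $d\hat F$, which may assign negative mass---must be handled, and it needs both boundedness of the derivatives of $\Phi$ on the relevant region and $O_p(1)$ control of the total variation of the various partial mixed differences of the empirical CDFs. A secondary difficulty is the boundary behavior in (i): ensuring that $\Prb(W=1\mid Z_1\le z_1)$ and $\Prb(W=1\mid Z_2\le z_2)$ do not approach $0$ as $(z_1,z_2)$ nears the lower corner of $\mathcal S$, which may require a mild additional overlap/regularity condition (e.g.\ continuity of $k_1,k_2$ on the compact support, placing these conditional probabilities in a fixed compact subinterval of $(0,1]$). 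With (i) and (ii) in hand, the rest is a routine Glivenko--Cantelli-plus-approximation argument.
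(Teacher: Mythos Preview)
Your continuity argument for $M(\theta)$ matches the paper's. For \eqref{eq:ULLN}, however, you take a genuinely different route. The paper first shows $\|\hat F-F\|_\infty\to 0$ a.s.\ (your step (i)), then argues that on a bounded support this upgrades to convergence of the induced (signed) measure $\hat\mu$ to $\mu$ in the Wasserstein distance $W_1$ (weak convergence plus convergence of first moments; Theorem~6.9 in \citet{villani2009optimal}), and finally invokes Kantorovich--Rubinstein duality:
\[
\sup_{\|f\|_{Lip}\le M_1}\left|\int f\,d\hat\mu-\int f\,d\mu\right|=M_1\,W_1(\hat\mu,\mu)\to 0.
\]
Since \Cref{a:Lipschitz}(2) places every $m(\cdot;\theta)$ inside this $M_1$-Lipschitz ball, the supremum over $\theta\in\Theta$ falls out for free. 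The Lipschitz hypothesis is thus used as the \emph{test class} in the duality, which eliminates any need to control the total variation of $d\hat F$ or to discretize $m$.

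Your approach---a step-function approximation combined with a total-variation bound---is more elementary and avoids optimal-transport machinery, but the price is exactly your step (ii): establishing $\sum_{\zeta}|\hat f(\zeta)|=O_p(1)$. Your discrete chain-rule sketch is plausible when $d=1$, but for $d\ge 2$ the $2d$-fold mixed difference of $\Phi(\hat p,\hat F_1,\dots,\hat F^w)$ generates a proliferation of cross terms (products of partial differences of several empirical CDFs, evaluated on a product grid whose cardinality grows with $n$), and verifying that all of them sum to $O_p(1)$ is substantial bookkeeping; this is precisely what the paper's duality argument sidesteps. If you do carry (ii) through, your step (iii) is correct and you recover the same conclusion by a more elementary path, so the difference is one of technique rather than strength. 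Your boundary concern in (i) is legitimate but is not addressed in the paper's proof either.
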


Proof can be found in \Cref{proof_lemma_Lipschitz}.

To allow for non-smooth moment functions, such as $m(z;\theta)=1(z\le\theta)$, one can apply \Cref{lem:ULLN-prime}.

\begin{assumption}\label{a:m-Lipschitz}
The moment function satisfies the following:
\begin{enumerate}
    \item\label{m-bv} $m(z;\theta)$ is of bounded variation as a function of $z$ for each $\theta\in\Theta$;
    \item\label{m-cont-in-theta}
    For every $\theta\in\Theta$, the set $\{z\,\vert\,\lim_{\gamma\to\theta}m(z;\gamma) = m(z;\theta)\}$ has probability 1 w.r.t. $F$;
    \item\label{m-dominated} There exists a function $d(z)$ such that $\Vert m(z;\theta) \Vert \le d(z)$ for all $\theta\in\Theta$ and $\int d(z)\, dF(z)<\infty;$
    \item\label{u-bv} Function $u(z;\theta, d) = \sup_{\vert\gamma-\theta\vert\le d}\vert m(z;\gamma)-m(z;\theta)\vert$ is of bounded variation as a function of $z$ for each $\theta\in\Theta$ and for each $d\le \varepsilon$ for some $\varepsilon>0.$
\end{enumerate}
\end{assumption}

\Cref{a:m-Lipschitz} is similar to the classical statement of the uniform law of large numbers, see Lemma 2.4 in \citet{newey1994large}, with the additional high-level requirement on the function $u(z;\theta, d)$ to be of bounded variation.

\begin{lem}\label{lem:ULLN-prime}
    Under \Cref{a:identif,a:theta,a:m-Lipschitz}, the conclusions of \Cref{lem:ULLN_Lipschitz} hold.
\end{lem}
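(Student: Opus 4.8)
The plan is to mimic the structure of the proof of Lemma \ref{lem:ULLN_Lipschitz} but replace the Lipschitz-based modulus-of-continuity arguments with a bracketing/bounded-variation argument, so that non-smooth moment functions such as indicators are covered. The key observation is that both conclusions (continuity of $M(\theta)=\int m(z;\theta)\,dF(z)$ and the uniform convergence \eqref{eq:ULLN}) follow once we establish two things: (i) a pointwise-in-$\theta$ convergence $\int m(z;\theta)\,d\hat F(z)\pto\int m(z;\theta)\,dF(z)$, and (ii) a stochastic equicontinuity / uniform-tightness statement that upgrades pointwise convergence to uniform convergence over the compact set $\Theta$. Continuity of $M$ itself is the easy part: it follows from \Cref{a:m-Lipschitz}\eqref{m-cont-in-theta}, \eqref{m-dominated} and dominated convergence.

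First I would prove the pointwise claim. Fix $\theta$. By \Cref{a:m-Lipschitz}\eqref{m-bv}, $m(\cdot;\theta)$ has bounded variation on $\R^{2d}$, so $\int m(z;\theta)\,dG(z)$ is a continuous linear functional of $G$ in the sup-norm topology on CDFs; concretely, integration by parts (the multivariate Lebesgue--Stieltjes / Young integration-by-parts formula) expresses $\int m\,d(\hat F - F)$ as a sum of integrals of $(\hat F-F)$ against the (signed) variation measure of $m$, each of which is bounded by $\|\hat F - F\|_\infty$ times the total variation of $m(\cdot;\theta)$ on the (bounded, by \Cref{a:bounded-support}) support. Since Theorem \ref{thm:identification} writes $\hat F = \Phi(\hat p,\hat F_1,\dots,\hat F^w)$ as a fixed continuous (indeed, away from the boundary, smooth) transformation of the empirical CDFs, and each empirical CDF converges uniformly to its population counterpart by the Glivenko--Cantelli theorem while $\hat p\pto p$, the continuous mapping theorem gives $\|\hat F - F\|_\infty\pto 0$. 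Hence $\int m(z;\theta)\,d\hat F(z)\pto\int m(z;\theta)\,dF(z)$ for each fixed $\theta$. (This part is essentially identical to the corresponding step in the proof of \Cref{lem:ULLN_Lipschitz}, with ``bounded variation'' in place of ``Lipschitz''.)

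Next I would upgrade to uniformity over $\Theta$ using a standard bracketing argument. Using compactness of $\Theta$, cover it by finitely many balls of radius $d$ centered at points $\theta_1,\dots,\theta_N$. For $\theta$ in the ball around $\theta_j$, the deviation $\bigl|\int m(\cdot;\theta)\,d\hat F - \int m(\cdot;\theta_j)\,d\hat F\bigr|$ is bounded by $\int u(z;\theta_j,d)\,d\hat F(z)$, where $u$ is the sup-envelope from \Cref{a:m-Lipschitz}\eqref{u-bv}; crucially $u(\cdot;\theta_j,d)$ is of bounded variation, so the same integration-by-parts bound shows $\int u(z;\theta_j,d)\,d\hat F(z)\to\int u(z;\theta_j,d)\,dF(z)$, and the latter tends to $0$ as $d\to 0$ by dominated convergence (domination via \Cref{a:m-Lipschitz}\eqref{m-dominated}: $u\le 2d(z)$, and \eqref{m-cont-in-theta} gives pointwise $u(z;\theta_j,d)\to 0$). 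Combining: given $\varepsilon>0$, choose $d$ so that $\int u(z;\theta_j,d)\,dF(z)<\varepsilon/3$ for every $j$ (finitely many), then use pointwise convergence at the centers $\theta_1,\dots,\theta_N$ and the sample-analog triangle inequality $\sup_\theta|\int m\,d\hat F - \int m\,dF|\le\max_j|\int m(\cdot;\theta_j)\,d\hat F - \int m(\cdot;\theta_j)\,dF| + \max_j\int u(\cdot;\theta_j,d)\,d\hat F + \max_j\int u(\cdot;\theta_j,d)\,dF$; each of the three terms is below $\varepsilon/3$ with probability approaching one.

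The main obstacle is the integration-by-parts bound that controls $\int g\,d(\hat F - F)$ by $\|\hat F - F\|_\infty \cdot \mathrm{TV}(g)$ for a bounded-variation integrand $g$ in dimension $2d>1$: the multivariate version requires some care with the Hardy--Krause notion of variation and with boundary terms, and one must check that $\Phi$ evaluated at the empirical arguments stays away from the singularities of $G^{-1}$ (where denominators $F_1,F_2$ vanish) with probability approaching one, which is where \Cref{a:bounded-support} and a trimming/continuity argument near the support boundary enter. Once that inequality is in hand, everything else is the routine finite-$\varepsilon$-net bookkeeping above, and the continuity of $M$ is immediate from dominated convergence.
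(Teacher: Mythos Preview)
Your approach is essentially the same as the paper's: both proofs establish continuity of $M$ via dominated convergence, obtain pointwise convergence $\int m(\cdot;\theta)\,d\hat F\to\int m(\cdot;\theta)\,dF$ by combining multivariate integration by parts for bounded-variation integrands (the paper cites \citet{young1917multiple}) with the uniform convergence $\hat F\to F$ coming from Glivenko--Cantelli plus the continuous mapping theorem applied to $\Phi$, and then upgrade to uniformity over $\Theta$ via exactly the finite-cover/bracketing argument with the envelope $u(z;\theta_k,d_k)$ that you describe (following \citet{tauchen1985diagnostic}).

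One small inaccuracy: you invoke \Cref{a:bounded-support} in the pointwise step, but that assumption is \emph{not} among the hypotheses of \Cref{lem:ULLN-prime}. The paper's proof does not use bounded support here; the integration-by-parts bound relies only on \Cref{a:m-Lipschitz}\eqref{m-bv} and \eqref{u-bv} (bounded variation of $m$ and $u$) together with the sup-norm convergence of $\hat F$ to $F$. Dropping the reference to \Cref{a:bounded-support} and relying solely on the bounded-variation hypothesis brings your argument in line with the statement as written.
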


 Proof can be found in \Cref{proof_lem_prime}.
 It follows from Lemma 1 of \citet{tauchen1985diagnostic} with an adjustment to the fact that $\hat{F}$ is not an empirical CDF. 

\begin{theorem}[Consistency]\label{thm:consistency}
If the assumptions of \Cref{lem:ULLN_Lipschitz} or \Cref{lem:ULLN-prime} hold,
then $$\hat{\theta}\asto \theta.$$
\end{theorem}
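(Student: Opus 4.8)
The plan is to follow the classical consistency argument for extremum/GMM-type estimators as laid out in \citet{newey1994large}, treating $\hat\theta$ as the solution of the sample moment equations \eqref{estimator} and $\theta$ as the solution of the population moment equations \eqref{eq:mom-cond-1}. The two ingredients of that argument are (i) a uniform law of large numbers, i.e. $\sup_{\theta\in\Theta}\|\int m(z;\theta)\,d\hat F(z) - M(\theta)\|\pto 0$ together with continuity of $M(\theta)$, and (ii) an identification condition, namely that $\theta$ is the unique zero of $M(\theta)$ on the compact set $\Theta$. Ingredient (i) is exactly the content of \Cref{lem:ULLN_Lipschitz} (resp. \Cref{lem:ULLN-prime}) under whichever set of assumptions is invoked, and ingredient (ii) is \Cref{a:theta}. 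So the proof is essentially an invocation of the standard consistency theorem once those pieces are in place.

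Concretely, I would first fix notation: write $\hat M(\theta) = \int m(z;\theta)\,d\hat F(z) = \sum_{\zeta\in\hat{\mathcal Z}} m(\zeta;\theta)\hat f(\zeta)$ and $M(\theta) = \int m(z;\theta)\,dF(z)$, so that $\hat\theta$ satisfies $\hat M(\hat\theta)=0$ and $M(\theta)=0$. By \Cref{lem:ULLN_Lipschitz} or \Cref{lem:ULLN-prime}, $M$ is continuous on the compact set $\Theta$ and $\sup_{\theta\in\Theta}\|\hat M(\theta)-M(\theta)\|\pto 0$. Since $m$ may be exactly identified ($d_m=d_\theta$) but $\hat M(\hat\theta)$ need not be exactly zero in degenerate finite samples, it is cleanest to phrase $\hat\theta$ as an approximate minimizer of $\|\hat M(\theta)\|$, i.e. $\|\hat M(\hat\theta)\| \le \inf_{\theta\in\Theta}\|\hat M(\theta)\| + o_p(1) = o_p(1)$, the last equality because $\|\hat M(\theta)\| \le \|\hat M(\theta)-M(\theta)\| + \|M(\theta)\|$ and $\|M(\theta)\|$ vanishes at $\theta$. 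Then the objective $Q(\theta):=\|M(\theta)\|$ is continuous, nonnegative, uniquely minimized (at value $0$) at $\theta$ by \Cref{a:theta}(ii), and $\sup_\theta|\,\|\hat M(\theta)\| - Q(\theta)\,| \le \sup_\theta\|\hat M(\theta)-M(\theta)\|\pto 0$ by the reverse triangle inequality, so Theorem 2.1 of \citet{newey1994large} applies verbatim and yields $\hat\theta\pto\theta$.

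If one prefers a self-contained argument instead of citing the extremum-estimator theorem, the standard compactness/subsequence route works: take any subsequence along which $\hat\theta$ converges (possible since $\Theta$ is compact) to some limit $\theta^*$; along this subsequence $0 = \hat M(\hat\theta) = M(\hat\theta) + (\hat M(\hat\theta)-M(\hat\theta))$, where the second term is $o_p(1)$ by the ULLN and the first tends to $M(\theta^*)$ by continuity of $M$ and $\hat\theta\to\theta^*$, so $M(\theta^*)=0$, whence $\theta^*=\theta$ by the uniqueness in \Cref{a:theta}(ii); since every convergent subsequence has the same limit $\theta$, the whole sequence converges in probability to $\theta$. The only genuinely nontrivial step in all of this is the ULLN with integration against the non-standard "CDF" $\hat F$ (which may assign negative jump masses), but that work has already been carried out in \Cref{lem:ULLN_Lipschitz} and \Cref{lem:ULLN-prime}; given those lemmas, the present theorem is routine, and I would not expect any obstacle beyond bookkeeping.
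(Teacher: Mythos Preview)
Your proposal is correct and follows essentially the same approach as the paper: the paper's proof is nothing more than the statement that \Cref{thm:consistency} ``follows from \Cref{lem:ULLN_Lipschitz}, \Cref{lem:ULLN-prime}, and an argument similar to the one in Theorem 2.1 of \citet{newey1994large},'' which is exactly the route you spell out. If anything, you give more detail than the paper does; there is no gap.
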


\Cref{thm:consistency} follows from \Cref{lem:ULLN_Lipschitz}, \Cref{lem:ULLN-prime}, and an argument similar to the one in Theorem 2.1 of \citet{newey1994large}.

Finally, we establish $\sqrt{n}$-consistency and asymptotic normality of $\hat\theta$ and the validity of the nonparametric bootstrap for inference on $\theta$.
To this end, we impose the following assumption that ensures the Hadamard differentiability of $\theta$ as a functional of the distribution of the data.
\begin{assumption}\label{as:M-is-differentiable}
\text{}
    \begin{enumerate}
    \item \label{as:M-diff-in-theta} There exists $\varepsilon>0$ and a neighborhood $\tilde \Theta$ of $\theta$ such that
    \[
    M(\tilde \theta,\tilde F) := \int m(z;\tilde \theta) \, d \tilde F(z)
    \]
    is continuously differentiable in $\tilde \theta \in \tilde \Theta$ for all $\tilde F$ such that $\sup_{z \in \R^{2d}}|\tilde F(z)-F(z)| < \varepsilon$.
    \item\label{as:M-diff-nondegenerate} The Jacobian $\frac{\partial}{\partial \theta} M(\theta,F)$ exists and is nonsingular.
\end{enumerate}    
\end{assumption}
\Cref{as:M-is-differentiable}.\ref{as:M-diff-in-theta} holds if $m(z;\tilde \theta)$ is continuously differentiable in $\tilde \theta$ for $F$-a.e. $z$, without further restrictions on the distribution $F$.
It also holds in the ``minimum distance'' case $m(z;\tilde \theta)=\tilde m(z)-\tilde \theta$.
\Cref{as:M-is-differentiable}.\ref{as:M-diff-nondegenerate} is standard and posits local identification.

\begin{theorem}[Inference]\label{thm:inference}
    Suppose that \Cref{as:M-is-differentiable} holds.
    Then $\sqrt{n}(\hat\theta - \theta)$ converges weakly to a normal distribution that can be consistently estimated by the nonparametric bootstrap.
\end{theorem}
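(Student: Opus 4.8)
The plan is to write $\hat\theta$ as a Hadamard differentiable functional of the empirical processes underlying the data and then apply the functional delta method together with its bootstrap analogue. Throughout I use the maintained assumptions (in particular \Cref{a:bounded-support}, which keeps all relevant objects on a compact set $K$, and \Cref{thm:consistency}, which gives $\hat\theta \pto \theta$), and I assume the three sample sizes are comparable, $n_2/n_1 \to c_2 \in (0,1]$ and $n_r/n_1 \to c_r \in (0,1]$, writing $n = n_1$. First I would establish the joint Donsker property. The first-period sample, the balanced panel, and the refreshment sample are mutually independent, and multivariate empirical CDFs are Donsker because lower-left orthants form a VC class; hence
\[
\sqrt n\Big(\big(\hat p,\hat F_1,\hat F_2,\hat F_1^w,\hat F_2^w,\hat F^w\big) - \big(p,F_1,F_2,F_1^w,F_2^w,F^w\big)\Big) \weakto \mathbb{G},
\]
a tight mean-zero Gaussian process in $\R\times\ell^\infty(K)^5$ whose covariance reflects the block independence and the limiting proportions $c_2,c_r$.

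Next I would push this through the map $\Phi$ by the delta method. On the region where the denominators $F_1$, $F_2$, and $p$ are bounded away from $0$, the function $\Phi$ is a composition of arithmetic operations with the fixed smooth maps $G$ and $G^{-1}$ (invertibility and smoothness of $G^{-1}$ coming from \Cref{a:identif}), so it is Fréchet, hence Hadamard, differentiable, and therefore $\sqrt n(\hat F - F) \weakto \Phi'_F(\mathbb{G}) =: \mathbb{G}_F$, a tight Gaussian process in $\ell^\infty(K)$. \emph{This is the main obstacle:} near the lower-left boundary of $K$, the CDFs $F_1$, $F_2$, $F^w$ all vanish, and one must check that the ratios $pF_1^w/F_1$ and $pF_2^w/F_2$ remain in a compact subset of the domain of $G^{-1}$. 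I would handle this by imposing (or noting as an implicit regularity condition) that the densities of $F_1$, $F_2$, $F^w$ are bounded and bounded away from zero on $K$, so that these ratios have finite positive limits; alternatively, one can trim a shrinking neighborhood of the boundary and verify that its contribution to the second-step moment integral is $o_p(n^{-1/2})$, since $m$ has bounded variation.

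The third step is the implicit solution. The map $\tilde F \mapsto \tilde\theta(\tilde F)$ defined by $M(\tilde\theta,\tilde F) = \int m(z,\tilde\theta)\,d\tilde F(z) = 0$ is, by \Cref{as:M-is-differentiable} and the implicit-function theorem for Hadamard differentiable maps, Hadamard differentiable at $F$ with derivative
\[
h \longmapsto -\Big[\tfrac{\partial}{\partial\theta}M(\theta,F)\Big]^{-1} \dot M_F(h), \qquad \dot M_F(h) = \int m(z,\theta)\,dh(z),
\]
where, since $m$ is of bounded variation on the compact $K$, integration by parts shows $\dot M_F$ is a bounded linear functional of $h\in\ell^\infty(K)$; the nondegeneracy of the Jacobian in \Cref{as:M-is-differentiable} and consistency from \Cref{thm:consistency} license this argument. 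Chaining the two delta-method steps via the chain rule gives
\[
\sqrt n(\hat\theta - \theta) \weakto -\Big[\tfrac{\partial}{\partial\theta}M(\theta,F)\Big]^{-1}\dot M_F(\mathbb{G}_F) =: \mathcal N,
\]
which, being a continuous linear image of the Gaussian $\mathbb{G}$, is (multivariate) normal with mean zero.

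Finally, for the bootstrap I would invoke the conditional Donsker theorem for the nonparametric bootstrap of empirical CDFs, and then the functional delta method for the bootstrap: because $\Phi$ (on the relevant domain) and the implicit-solution map are Hadamard differentiable tangentially to the support of $\mathbb{G}$, the bootstrapped statistic $\sqrt n(\hat\theta^\ast - \hat\theta)$ converges weakly, conditionally on the data and in probability, to the same limit $\mathcal N$. Hence the bootstrap distribution consistently estimates the asymptotic law of $\sqrt n(\hat\theta-\theta)$, which completes the proof.
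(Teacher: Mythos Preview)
Your overall strategy---Donsker for the primitives, then delta method through $\Phi$, then an implicit-function delta method for $\theta(F)$, then the bootstrap delta method---is exactly what the paper does. However, there is a real gap in your first step: you assert that ``the first-period sample, the balanced panel, and the refreshment sample are mutually independent,'' but the balanced panel consists of the \emph{stayers} from the first-period sample, so $\hat F_1$ shares observations with $\hat F_1^w,\hat F_2^w,\hat F^w$ and these processes are dependent. Relatedly, $n_2$ is not a design sample size with $n_2/n_1\to c_2$ but the random count $\sum_{i\le n_1} W_i$, with $n_2/n_1\to p$ almost surely. Treating the three blocks as independent empirical processes therefore does not deliver the joint weak limit you claim, and the covariance of your $\mathbb{G}$ (which you say ``reflects the block independence'') would be wrong.

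The paper resolves this by packaging all panel information into a single observed vector per unit, $\eta_i=(W_i,Z_{1i}',W_iZ_{2i}',(Z_{2i}^r)')'$, and writing each of $p,F_1,F_2,F_1^w,F_2^w,F^w$---and hence $F=\phi(F_\eta)$---as an explicit Hadamard differentiable functional of the single CDF $F_\eta$. One standard multivariate Donsker theorem then gives $\sqrt{n}(\hat F_\eta-F_\eta)\weakto\G_{F_\eta}$, the dependence among the panel-based components is captured automatically by the covariance of $\G_{F_\eta}$, and a single application of the bootstrap delta method (Theorem 3.10.11 in van der Vaart and Wellner) finishes the argument. Your boundary concern about the ratios $pF_1^w/F_1$ and $pF_2^w/F_2$ near the edge of the support is legitimate; the paper does not address it and implicitly treats $\Phi$ as differentiable on the relevant range, so you are right to flag it as an additional regularity condition.
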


Proof can be found in \Cref{proof_inference}.

\section{Monte Carlo simulation}\label{sec:mc}

\begin{table}[h!]
\centering
\small
\begin{tabular}{c c c c c c }
& & \multicolumn{2}{c}{$n_1=n_r=1000$} & \multicolumn{2}{c}{$n_1=n_r=10,000$}\\											
\cmidrule(lr){3-4} \cmidrule(lr){5-6}											
	&		&	$\hat\theta$	&	$\hat\theta_{naive}$	&	$\hat\theta$	&	$\hat\theta_{naive}$	\\
    \hline
\multirow{6}{*}{$m=5$}	&	bias	&	-0.004	&	-0.208	&	-0.001	&	-0.205	\\
	&	rmse	&	0.229	&	0.279	&	0.071	&	0.213	\\
	&	mae	&	0.183	&	0.236	&	0.057	&	0.205	\\
	&	coverage $99\%$	&	0.993	&		&	0.992	&		\\
	&	coverage $95\%$	&	0.953	&		&	0.953	&		\\
	&	coverage $90\%$	&	0.905	&		&	0.905	&		\\
    \hline
\multirow{6}{*}{$m=10$}	&	bias	&	0.003	&	-0.171	&	0.000	&	-0.171	\\
	&	rmse	&	0.494	&	0.428	&	0.150	&	0.210	\\
	&	mae	&	0.387	&	0.351	&	0.120	&	0.181	\\
	&	coverage $99\%$	&	0.998	&		&	0.993	&		\\
	&	coverage $95\%$	&	0.964	&		&	0.957	&		\\
	&	coverage $90\%$	&	0.907	&		&	0.909	&		\\
    \hline
\multirow{6}{*}{$m=20$}	&	bias	&	-0.005	&	-0.030	&	0.003	&	-0.027	\\
	&	rmse	&	0.881	&	0.836	&	0.271	&	0.264	\\
	&	mae	&	0.688	&	0.680	&	0.215	&	0.212	\\
	&	coverage $99\%$	&	0.998	&		&	0.990	&		\\
	&	coverage $95\%$	&	0.981	&		&	0.949	&		\\
	&	coverage $90\%$	&	0.950	&		&	0.902	&		
\end{tabular}
\caption{Simulation results for the discrete DGP. Number of simulations $S=5000$. Bias, RMSE, and MAE are reported as shares of the true value of $\theta(m)$.}
\label{tab-mc-discr}
\end{table}

\begin{table}
\centering
\small
\begin{tabular}{c c c c c c }
& & \multicolumn{2}{c}{$n_1=n_r=1000$} & \multicolumn{2}{c}{$n_1=n_r=5000$}\\									
\cmidrule(lr){3-4} \cmidrule(lr){5-6}	
&		&	$\hat\theta$	&	$\hat\theta_{naive}	$ &	$\hat\theta$	&	$\hat\theta_{naive}	$\\
\hline
\multirow{6}{*}{$\nu=0.2$}	&	bias	&	0.001	&	0.001	&	0.000	&	0.001	\\
	&	rmse	&	0.009	&	0.014	&	0.004	&	0.007	\\
	&	mae	&	0.007	&	0.012	&	0.003	&	0.005	\\
	&	coverage $99\%$	&	0.984	&	&	0.992	&	\\
	&	coverage $95\%$	&	0.942	&	&	0.964	&	\\
	&	coverage $90\%$	&	0.878	&	&	0.922	&	\\
\hline
\multirow{6}{*}{$\nu=0.5$}	&	bias	&	0.001	&	0.001	&	0.000	&	0.002	\\
	&	rmse	&	0.010	&	0.016	&	0.004	&	0.007	\\
	&	mae	&	0.008	&	0.013	&	0.003	&	0.006	\\
	&	coverage $99\%$	&	0.980	&	&	0.988	&		\\
	&	coverage $95\%$	&	0.936	&	&	0.96	&	\\
	&	coverage $90\%$	&	0.892	&	&	0.926	&	\\
\hline
\multirow{6}{*}{$\nu=0.8$}	&	bias	&	0.001	&	0.001	&	0.001	&	0.002	\\
	&	rmse	&	0.010	&	0.017	&	0.004	&	0.008	\\
	&	mae	&	0.008	&	0.014	&	0.003	&	0.006	\\
	&	coverage $99\%$	&	0.980	&	&	0.976	&	\\
	&	coverage $95\%$	&	0.934	&	&	0.942	&	\\
	&	coverage $90\%$	&	0.886	&	&	0.908	&	\\
\end{tabular}
\caption{Simulation results for the continuous DGP. Number of simulations $S=500$.}
\label{tab-mc-cont}
\end{table}

In this section, we illustrate the performance of our two-step estimator and the associated bootstrap confidence intervals in simulations.
We abstract away from both conditioning on the covariates and nonlinear moments, as in the examples above, and employ two simple data-generating processes (DGPs), one with discrete variables and one with continuous variables.
Both DGPs employ the logit link function $G(x) = (1+e^{-x})^{-1}$ for the conditional selection probability.\footnote{We explore the sensitivity of our methodology to misspecification of the link function in \Cref{sec:misspec}.}

\textbf{Discrete data}

Our first DGP is a discrete Markov process with scalar outcomes.
We posit that $Z_1$ has the uniform distribution over $\{1,\dots,m\}$, where $m \in \{5,10,20\}$, and $Z_2$ is determined from $Z_1$ given a transition matrix with positive elements.
The attrition functions are 
\begin{align*}
    k_1(z_1)=c_1 z_1, \quad k_2(z_2) = c_2 z_2,
\end{align*}
where the constants $c_1,c_2$ and the transition matrix are chosen so that the unconditional attrition rate is $30\%$.
The target parameter is $\theta(m) = \Prb_m(Z_2=1|Z_1=1)$ with true values
\begin{align*}
    \theta(5) = 0.18, \quad \theta(10) = 0.09, \quad \theta(20) = 0.04.
\end{align*}
Since the parameter of interest is a probability that decreases with the number of support points, we present the bias, root mean-squared error (RMSE), and mean absolute deviation (MAE) as shares of the true value of $\theta(m)$.

\textbf{Continuous data}

Our second DGP is a copula model for continuous variables $Z_1$ and $Z_2$.
The (target) joint distribution is modeled as a Gaussian copula
\begin{align*}
    F(z_1,z_2) = \Phi_\nu(\Phi^{-1}(z_1),\Phi^{-1}(z_2)), \quad (z_1,z_2) \in [0,1]^2,
\end{align*}
where $\Phi_\nu$ is a bivariate Gaussian CDF with zero means, unit variances, and correlation $\nu \in \{0.2,0.5,0.8\}$, and $\Phi$ is the CDF of $N(0,1)$.
The functions in the attrition mechanism are
\begin{align*}
    k_1(z_1)=c_0+c_1z_1, \quad k_2(z_2)=c_2 z_2,
\end{align*}
where the constants $c_0,c_1,c_2$ are chosen so that the overall attrition rate is about $70\%$.
The target parameter is $\theta(\nu)=\E_{\nu}[Z_1 Z_2]$ with the true values
\begin{align*}
    \theta(0.2) \approx 0.27, \quad \theta(0.5)\approx 0.29, \quad \theta(0.8)\approx 0.32.
\end{align*}

Tables \ref{tab-mc-discr} and \ref{tab-mc-cont} present the simulation results.
We report bias, root mean squared error (RMSE), and mean absolute error (MAE) of our estimator and of the ``naive'' estimator that ignores attrition and only uses the balanced panel.
We also calculate empirical coverages of the bootstrap confidence interval for nominal confidence levels $1-\alpha \in \{0.90,0.95,0.99\}$. We do not report the empirical coverages for the naive estimator due to its inconsistency.

Unsurprisingly, the naive estimator exhibits larger biases across various specifications and sample sizes, while our (consistent) estimator performs well, with both the bias and the RMSE decreasing with the sample size. The confidence interval coverage is close to nominal for both the discrete DGP and the continuous DGP.

\section{Empirical illustration}\label{sec:empirical}

We illustrate our methodology in a simple model of the change of income over the life course. We use survey data from the Understanding America Study (UAS) conducted by the \citet{uas2018}, a large household panel collected and maintained by the USC Center for Economic and Social Research. We estimate the following model,
\begin{align}
    \text{income}_{it} = \alpha_i + \lambda_t + \theta_1 \text{age}_{it} + \theta_2 \text{age}_{it}^2 + \varepsilon_{it}, \quad t=1,2,
\end{align}
where $\text{income}_{it}$ is the inverse hyperbolic sine of the household income.\footnote{This transformation approximates the logarithm of income for sufficiently large values of income while retaining zero income values.}

We use UAS wave 14 (year 2018) as the first period and UAS wave 15 (year 2020) as the second period.\footnote{Available at https://uasdata.usc.edu/page/Comprehensive+File+And+Panel+Dataset}
The number of households responding in the first period is $8145$, out of which $6432$ households also respond in the second period, with an attrition rate $21\%$.
There are 3983 households sampled in the second period that are not part of the sample in the first period.
These households serve as the refreshment sample.
We set the link function to be logistic, $G(x) = 1/(1+e^{-x})$.

\Cref{tab:emp-app} contains the estimates and the bootstrap standard errors for our procedure using the refreshment sample and for the naive procedure using only the balanced panel. Neither of the estimates is significant at the 5\% nominal level. However, the effect size is substantial: at the average level of income of the working population in 2018 (which is \$60,465), an additional year of life is associated with an average increase of $\sinh(\operatorname{asinh}(\$60,465)+\hat\theta_1+\hat\theta_2)-\$60,465 =\$24,824$ in annual household income.

\begin{table}[h!]
    \centering
    \begin{tabular}{c|c|c|c|c}
       & $\hat\theta_1$ & $\hat\theta_2$ & $\hat\theta_{1,\text{naive}}$ & $\hat\theta_{2,\text{naive}}$  \\
       \hline
       \text{estimate} & 0.347 & -0.003 & 0.165 & 0.000 \\
        \text{st.err.}  & (0.209) & (0.002) & (0.099) & (0.001) \\
    \end{tabular}
    \caption{Estimation results for the static linear regression.}
    \label{tab:emp-app}
\end{table}

\section{Conclusion}\label{sec:conclusion}

In most panel datasets encountered in practice, some units are only observed up to a non-terminal period. Such attrition may lead to significant bias in the estimates and invalid inference. Fortunately, the availability of refreshment samples allows one to remove the bias and restore valid inference under reasonably weak assumptions on the attrition process.

In this paper, we introduce one such assumption that leads to simple, computationally feasible estimators that admit bootstrap inference.
We hope this paper may serve as the first step towards developing feasible estimation and inference algorithms for data structures with nonrandom missingness and auxiliary information.

Topics for further research include an extension to multiple periods, more general missingness patterns, or high-dimensional covariates; dealing with initial non-response; theoretical and computational analysis of behavior under misspecification; establishing a semiparametric efficiency bound; and deriving optimal estimators.

\section*{Acknowledgements}

The project described in this publication relies on data from surveys administered by the Understanding America Study (UAS), which is maintained by the Center for Economic and Social Research (CESR) at the University of Southern California. The project was supported by the National Institute on Aging of the National Institutes of Health and, in part, by the Social Security Administration under Award Number U01AG077280. The content is solely the responsibility of the authors and does not necessarily represent the official views of the National Institutes of Health, USC, or UAS.

We appreciate valuable comments and suggestions from Isaiah Andrews, Tim Armstrong, Arie Kapteyn, Sergey Lototsky, Kirill Ponomarev, John Pepper, Geert Ridder, Fedor Sandomirskiy, Alexander Shapoval, seminar participants at Princeton University and University of Rochester, and conference participants at 16th Greater New York Metropolitan Area Econometrics Colloquium, 2024 North American Summer Meeting of the Econometric Society, Western Economic Association International 99th Annual Conference, and Econometric Society Summer School in Dynamic Structural Econometrics.
All errors and omissions are our own.

\bibliographystyle{apalike}
\bibliography{references}

\newpage

\appendix
\appendixpage

\begin{subappendices}

\section{Proof of \Cref{lem:ULLN_Lipschitz}}\label{proof_lemma_Lipschitz}

\Cref{a:m-Lipschitz}.\ref{m-cont-in-theta} implies by dominated convergence that $M(\theta)$ is a continuous function of $\theta$. 

The multivariate Glivenko-Cantelli theorem implies that, with probability 1,
\begin{align*}
    &\hat F_1 \to F_1, \quad \hat F_2 \to F_2, \\
    &\hat F_1^w \to F_1^w, \quad \hat F_2^w \to F_2^w, \\
    &\hat F^w \to F^w, \quad \hat p \to p,
\end{align*}
where the underlying functional spaces are equipped with uniform norms.
Since $\Phi$ is continuous, by the continuous mapping theorem, with probability 1,
\begin{align}
    \hat F &= \Phi(\hat p,\hat F_1,\hat F_2,\hat F_1^w, \hat F_2^w, \hat F^w) \to \Phi(p, F_1, F_2, F_1^w, F_2^w, F^w) = F. \label{eq:F-hat-to-F}
\end{align}
Let $\hat\mu$ be the signed measure corresponding to the step function $\hat F$, and let $\mu$ be the probability measure with the distribution function $F$.
Then, by the Portmanteau theorem, $\hat\mu \weakto \mu$ a.s.

The supports of the empirical CDFs $\hat F_1,\hat F_2,\hat F_1^w, \hat F_2^w,\hat F^w$ lie in the respective supports of $F_1$, $F_2$, $F_1^w$, $F_2^w$, $F^w$, which are bounded by \Cref{a:bounded-support}.
Therefore, $\hat \mu$ also converges to $\mu$ in $L^1$, a.s.

Because weak convergence and convergence in $L^1$ imply convergence in the Wasserstein distance $W_1$ (see, e.g., Theorem 6.9 in \citet{villani2009optimal}), we conclude that $W_1(\hat{\mu},\mu)\to 0 \text{ a.s.}$

By the Kantorovich–Rubinstein duality, for any measures $\mu,\nu$ and any $K>0$,
\begin{align*}
    W_{1}(\mu ,\nu )= \frac {1}{K}\sup_{\|f\|_{Lip}\leq K} \left( \int f(z) \, d\mu(z) - \int f(z)\, d\nu(z) \right).
\end{align*}
Applying this representation to measures $\hat{\mu}$ and  $\mu$ and the moment function $m(z;\theta)$ componentwise and using \Cref{a:bounded-support},
\begin{equation}
\frac {1}{M_1} \sup_{\theta\in\Theta}\left\vert \int  m(z;\theta)d\hat{F} - \int  m(z;\theta)dF\right\vert \le W_{1}(\hat{\mu},\mu).
\label{ap:B}    
\end{equation}
Taking into account that $W_1(\hat\mu,\mu) \to 0$ a.s. completes the proof.

\section{Proof of \Cref{lem:ULLN-prime}}\label{proof_lem_prime}

We follow the proof of Lemma 1 in \citet{tauchen1985diagnostic}. \Cref{a:m-Lipschitz}.\ref{m-dominated} ensures that the expectation $M(\theta):=\int m(z;\theta)\, dF(z)$ exists.
\Cref{a:m-Lipschitz}.\ref{m-cont-in-theta} implies by dominated convergence that $M(\theta)$ is a continuous function of $\theta$.  

Fix any $\varepsilon >0$.
By almost sure continuity, $\lim u(Z;\theta,d)=0$ as $d\to 0$, with $\theta$ fixed, $F$-a.s.
Thus by the dominated convergence theorem, there exists $\bar{d}(\theta)$ such that $\E[u(Z;\theta,d)]\le \varepsilon$ whenever $d\le \bar{d}(\theta)$.
Let $B(\theta)$ denote the open ball of radius $\bar{d}(\theta)$ around each $\theta\in\Theta$. Together the balls $B(\theta)$, $\theta\in\Theta$ cover $\Theta$.
By compactness of $\Theta$, this cover has a finite subcover $B_k = B(\theta_k)$, $k=1,\ldots, K$.
Denote $d_k = \bar{d}(\theta_k)$ and $\mu_k = \E[u(Z;\theta_k,d_k)]$.
Note that if $\theta\in B_k,$ then $\mu_k\le \varepsilon$, and $\vert M(\theta) - M(\theta_k)\vert\le\varepsilon$.
For any $\theta\in B_k$,
\begin{align*}
    &\left\vert \int m(z;\theta)\, d\hat{F}(z) - \int m(z;\theta)\,dF(z)\right\vert  \\
    &\le \left\vert \int m(z;\theta)\, d\hat{F}(z) - \int m(z;\theta_k)\,d\hat{F}(z)\right\vert+\left\vert \int m(z;\theta_k)\, d\hat{F}(z) - M(\theta_k)\right\vert +\underbrace{\left\vert M(\theta_k)-M(\theta)\right\vert}_{\le\epsilon}\\
    &\le \int |m(z;\theta)-m(z;\theta_k)|\, d\hat{F}(z) +\left\vert \int m(z;\theta_k)\, d\hat{F}(z) - \int m(z;\theta_k)\,dF(z)\right\vert +\varepsilon\\
    &\le \int u(z;\theta_k,d_k)\, d\hat{F}(z) +\left\vert \int m(z;\theta_k)\, d\hat{F}(z) - \int m(z;\theta_k)\,dF(z)\right\vert +\varepsilon\\
    &\le \underbrace{\left\vert\int u(z;\theta_k,d_k)\, d\hat{F}(z)-\int u(z;\theta_k,d_k)\,dF(z)\right\vert}_{I}+\underbrace{\mu_k}_{\le\varepsilon} +\underbrace{\left\vert \int m(z;\theta_k)\, d\hat{F}(z) - \int m(z;\theta_k)\,dF(z)\right\vert}_{II} +\varepsilon
\end{align*}
Consider the functional
\begin{align*}
    T_\varphi(F) = \int \varphi(z) \, dF(z)
\end{align*}
for functions of bounded variation $\varphi: \R^2 \to \R$ and $F: \R^2 \to \R$.
By multivariate integration by parts (see, e.g., \citet{young1917multiple}),
\begin{align}\label{a:by_parts}
\int_{x,y}^{a,b} \varphi(z_1,z_2) \, dF(z_1,z_2) 
&=  \varphi(x,y)\left[F\right]_{x,y}^{a,b}
+\int_{x,y}^{a,b} \left[F\right]_{z_1,z_2}^{a,b} d \varphi(z_1,z_2) \notag \\
&+\int_{x}^{a} \left[F\right]_{z_1,y}^{a,b} d \varphi(z_1,y)+\int_{y}^{b}\left[F\right]_{x,z_2}^{a,b} d \varphi(x,z_2), 
\end{align}    
where
\[
\left[F\right]_{x,y}^{a,b} := F(x,y)-F(a,y)-F(x,b)+F(a,b).
\]
An analogous formula is also valid for any finite dimension.
This implies that $T$ is continuous in the uniform norm.

Finally, let us show that terms $I$ and $II$ can be made arbitrarily small.
Since both functions $u(z;\theta,d) $ and $m(z;\theta)$ are of bounded variation for fixed $\theta$ and $d$ by \Cref{a:m-Lipschitz}.\ref{m-bv} and \ref{a:m-Lipschitz}.\ref{u-bv}, we can use the multivariate version of the integration by parts formula \eqref{a:by_parts}.
Since $T$ is continuous and $\hat{F}$ converges a.s. to $F$ in the uniform norm (see \cref{eq:F-hat-to-F}), both terms I and II can be made arbitrarily small for all $k=1,\dots, K$, completing the proof.

\section{Proof of \Cref{thm:inference}}\label{proof_inference}

Denote the observed vector $\eta = (W,Z_1',W Z_2',(Z_2^r)')' \in \R^{1+3d}$ and let $F_\eta$ be its CDF.
Let $\hat F_\eta$ be the corresponding empirical CDF,
\begin{align*}
    \hat F_\eta(w,z_1,z_2,z_2^r) = \frac{1}{n} \sumin 1\left(W_i\le w, Z_{1i}\le z_1, W_iZ_{2i}\le z_2, Z_{2i}^r \le z_2^r \right).
\end{align*}
We can recover $p, F_1,F_2,F_1^w,F_2^w,F^w$ as functionals of $F_\eta \in \ell^\infty(\R^{1+3d})$ to corresponding spaces of bounded functions.
These functionals are defined by the following formulas.
\begin{align*}
 p(F_\eta)& = 1 - F_\eta (0,\infty,\infty,\infty), \\   
 F_1(F_\eta)(z_1) &= F_\eta (1,z_1,\infty,\infty), \\
 F_2(F_\eta)(z_2) &= F_\eta(1,\infty,\infty,z_2), \\
F_1^w(F_\eta)(z_1) &= \Prb(Z_1\le z_1|W=1) = \frac{\Prb(Z_1\le z_1,W=1)}{\Prb(W=1)} = \frac{\Prb(Z_1\le z_1,W\le 1)- \Prb(Z_1\le z_1,W\le 0)}{\Prb(W\le 1)-\Prb(W\le 0)} \\
&= \frac{F_\eta(1,z_1,\infty,\infty)-F_\eta(0,z_1,\infty,\infty) }{F_\eta(1,\infty,\infty,\infty)-F_\eta(0,\infty,\infty,\infty)}, \\
F_2^w(F_\eta)(z_2) &= \Prb(Z_2\le z_2|W=1) = \frac{\Prb(W Z_2\le z_2,W=1)}{\Prb(W=1)} \\
&= \frac{\Prb(WZ_2\le z_2,W\le 1)- \Prb(WZ_2\le z_2,W\le 0)}{\Prb(W\le 1)-\Prb(W\le 0)} = \frac{F_\eta(1,\infty,z_2,\infty)-F_\eta(0,\infty,z_2,\infty) }{F_\eta(1,\infty,\infty,\infty)-F_\eta(0,\infty,\infty,\infty)}, \\
F^w(F_\eta)(z_1,z_2) &= \Prb(Z_1\le z_1,Z_2\le z_2|W=1) = \frac{\Prb(Z_1\le z_1, W Z_2\le z_2,W=1)}{\Prb(W=1)} \\
& = \frac{\Prb(Z_1\le z_1, WZ_2\le z_2,W\le 1)- \Prb(Z_1\le z_1, WZ_2\le z_2,W\le 0)}{\Prb(W\le 1)-\Prb(W\le 0)} \\
&= \frac{F_\eta(1,z_1,z_2,\infty)-F_\eta(0,z_1,z_2,\infty)}{F_\eta(1,\infty,\infty,\infty)-F_\eta(0,\infty,\infty,\infty)}.
\end{align*}
Given these functionals, the target CDF $F$ also becomes a functional of $F_\eta$ via
\begin{align}\label{small_phi}
    F = \phi(F_\eta) :=  \Phi(p(F_\eta), F_1(F_\eta), F_2(F_\eta),F_1^w(F_\eta),F_2^w(F_\eta),F^w(F_\eta)).
\end{align}

\textbf{Step 1} (Hadamard differentiability of $F=\phi(F_\eta)$).

Let us show that $\phi: \ell^\infty(\R^{1+3d}) \to \ell^\infty(\R^{2d})$ is Hadamard differentiable at the true CDF $F_\eta$.
Here $\ell^\infty(\mathcal{A})$ is the space of bounded functions on the set $\mathcal{A}$.

Since $\Phi$ is differentiable, it suffices to show that the functionals $p, F_1,F_2,F_1^w,F_2^w,F^w$ are Hadamard differentiable, see Lemma 3.10.26 in \citet{vaart2023empirical}.

The functionals $F_1$, $F_2$, and $p$ are Hadamard differentiable because they are linear and bounded.

Let $\eta=(w,z_1,z_2,z_2^r)$. The functional $F_1^w$ is Hadamard differentiable at the true $F_\eta$ with the derivative
\begin{align*}
    (F_1^w)'(H)(\eta) = \frac{(F_1-F_0)(H_{1,z_1}-H_{0,z_1}) - (H_1-H_0)(F_{1,z_1} - F_{0,z_1})}{(F_1-F_0)^2},
\end{align*}
where $H \in \ell^\infty(\R^{1+3d})$ is an arbitrary direction and
\begin{align*}
   F_1 &= F_{\eta}(1,\infty,\infty,\infty), & F_0 &= F_{\eta}(0,\infty,\infty,\infty),\\
   F_{1,z_1} &= F_{\eta}(1,z_1,\infty,\infty), &  F_{0,z_1} &= F_{\eta}(0,z_1,\infty,\infty),\\
   H_1&= H(1,\infty,\infty,\infty), & H_0 &= H(0,\infty,\infty,\infty),\\
   H_{1,z_1} &= H(1,z_1,\infty,\infty), &  H_{0,z_1} &= H(0,z_1,\infty,\infty).
\end{align*}
Similarly, the Hadamard derivatives of $F_2^w$ and $F^w$, respectively, are
\begin{align*}
    (F_2^w)'(H)(\eta) &= \frac{(F_1-F_0)(H_{1,z_2}-H_{0,z_2}) - (H_1-H_0)(F_{1,z_2} - F_{0,z_2})}{(F_1-F_0)^2}, \\
    (F^w)'(H)(\eta) &= \frac{(F_1-F_0)(H_{1,z_1,z_2}-H_{0,z_1,z_2}) - (H_1-H_0)(F_{1,z_1,z_2} - F_{0,z_1,z_2})}{(F_1-F_0)^2},
\end{align*}
where 
\begin{align*}
   F_{1,z_2} &= F_{\eta}(1,\infty,z_2,\infty), &  F_{0,z_2} &= F_{\eta}(0,\infty,z_2,\infty),\\
   F_{1,z_1,z_2} &= F_{\eta}(1,z_1,z_2,\infty), &  F_{0,z_1,z_2} &= F_{\eta}(0,z_1,z_2,\infty),\\
   H_{1,z_2} &= H(1,\infty,z_2,\infty), &  H_{0,z_2} &= H(0,\infty,z_2,\infty),\\
   H_{1,z_1,z_2} &= H(1,z_1,z_2,\infty), &  H_{0,z_1,z_2} &= H(0,z_1,z_2,\infty).
\end{align*}
Note that $F_1 - F_0 \equiv p > 0$ by \Cref{a:identif}.

\textbf{Step 2} (Hadamard differentiability of $\theta=\theta(F)$).

Denote the functional
\begin{align}
    M(\theta,F) := \int m(z;\theta) \, d F(z)
\end{align}
and let $\theta(F)$ be its unique root, i.e.
\begin{align}
    M(\theta(F),F) = 0.
\end{align}
Let us show that $\theta$ is Fr\'{e}chet (and hence Hadamard) differentiable at $F_0$, the true value of $F$.
In what follows, differentiability is understood in the Fr\'{e}chet sense.
For this, we check the conditions of the implicit function theorem in Banach spaces; see, e.g., Theorem 5.9 in \citet{lang2012fundamentals}:
\begin{enumerate}
    \item $M(\theta,F)$ is continuously differentiable on $\Theta_0 \times \mathcal{F}_0$
    \item $D_1 M(\theta_0,F_0): \Theta \to \R^{d_\theta}$ is a \emph{toplinear isomorphism}, i.e., a bounded linear map with a bounded inverse.
\end{enumerate}

First, $M$ is continuously differentiable in $\theta \in \Theta_0$ for all $F \in \mathcal{F}_0$ by \Cref{as:M-is-differentiable}.\ref{as:M-diff-in-theta}.
Second, $M(\theta,F)$ is differentiable in $F$ because it is a bounded linear functional of $F$.
Since the existence of partial derivatives suffices for differentiability (see, e.g., Proposition 3.5 in \citet{lang2012fundamentals}), this establishes condition 1.

Condition 2 follows directly from \Cref{as:M-is-differentiable}.\ref{as:M-diff-nondegenerate} because for a derivative w.r.t. a finite-dimensional parameter $\theta$, its existence and nondegeneracy are equivalent to being a toplinear isomorphism.

\textbf{Step 3} (asymptotic normality and bootstrap validity).

The standard Donsker theorem states that
\begin{align}
    \sqrt{n} \left( \hat F_\eta - F_\eta \right) \weakto \G_{F_\eta} \text{ in } \ell^\infty(\R^{1+3d}), \label{eq:donsker-F-eta}
\end{align}
where $\G_{F_\eta}$ is the $F_\eta$-Brownian bridge, see, e.g., Example 2.5.4 in \citet{vaart2023empirical}. Moreover, since $\hat F_\eta$ is an empirical CDF, the nonparametric bootstrap is valid for it in the sense of Theorem 3.7.1 in \citet{vaart2023empirical}.

Applying Theorem 3.10.11 in \citet{vaart2023empirical} to the Hadamard differentiable functional $\theta=\theta(F)=\theta(\phi(F_\eta))$ then implies that the asymptotic normality and validity of the nonparametric bootstrap for $\hat\theta$.

 \section{Simulation results under misspecified link function}\label{sec:misspec}

    Here we present the results of Monte Carlo simulations performed under a misspecified link function.
    We report the results for our discrete DGP, see Section \ref{sec:mc}; the results for the continuous DGP are qualitatively similar.
    
    We maintain the logistic CDF $G(x)=(1+e^{-x})^{-1}$ (which has $\pi^2/3 \approx 3.29$) as the true link function, while choosing the zero-mean Gaussian link function for estimation.
    Notice that the variance of the latter can be set to an arbitrary positive value since it is subsumed by the nuisance functions $k_1$ and $k_2$.
    This is illustrated by the fact that the simulation results under the true variance in Table \ref{tab-mc-discr-misspec0} and those under the unit variance in Table \ref{tab-mc-discr-misspec1} are identical.

    Comparing Tables \ref{tab-mc-discr} and \ref{tab-mc-discr-misspec0} shows that using the Gaussian link in place of the true logistic link does not lead to any noticeable deterioration in the performance of our estimation and inference procedures.
    While some other (carefully selected) directions of misspecification may lead to such deterioration, we believe that leaving $k_1,k_2$ unrestricted gives our estimator sufficient adaptability to the underlying link function.

\begin{table}[h!]
\centering
\small
\begin{tabular}{c c c c c c }
& & \multicolumn{2}{c}{$n_1=n_r=1000$} & \multicolumn{2}{c}{$n_1=n_r=10,000$}\\											
\cmidrule(lr){3-4} \cmidrule(lr){5-6}											
	&		&	$\hat\theta$	&	$\hat\theta_{naive}$	&	$\hat\theta$	&	$\hat\theta_{naive}$	\\
    \hline
\multirow{6}{*}{$m=5$}	&	bias	&	-0.005	&	-0.208	&	-0.002	&	-0.205	\\
	&	rmse	&	0.229	&	0.279	&	0.071	&	0.213	\\
	&	mae	&	0.183	&	0.236	&	0.056	&	0.205	\\
	&	coverage $99\%$	&	0.993	&		&	0.992	&		\\
	&	coverage $95\%$	&	0.953	&		&	0.953	&		\\
	&	coverage $90\%$	&	0.905	&		&	0.903	&		\\
    \hline
\multirow{6}{*}{$m=10$}	&	bias	&	0.003	&	-0.171	&	-0.002	&	-0.171	\\
	&	rmse	&	0.494	&	0.427	&	0.150	&	0.210	\\
	&	mae	&	0.387	&	0.351	&	0.120	&	0.181	\\
	&	coverage $99\%$	&	0.998	&		&	0.993	&		\\
	&	coverage $95\%$	&	0.964	&		&	0.957	&		\\
	&	coverage $90\%$	&	0.907	&		&	0.908	&		\\
    \hline
\multirow{6}{*}{$m=20$}	&	bias	&	-0.004	&	-0.030	&	0.002	&	-0.027	\\
	&	rmse	&	0.881	&	0.836	&	0.271	&	0.264	\\
	&	mae	&	0.688	&	0.680	&	0.215	&	0.212	\\
	&	coverage $99\%$	&	0.998	&		&	0.9902	&		\\
	&	coverage $95\%$	&	0.981	&		&	0.949	&		\\
	&	coverage $90\%$	&	0.952	&		&	0.903	&		
\end{tabular}
\caption{Simulation results for the discrete DGP under a misspecified Gaussian link function with $V=\pi^2/3 \approx 3.29$. Number of simulations $S=5,000$. Bias, RMSE, and MAE are reported as shares of the true value of $\theta(m)$}
\label{tab-mc-discr-misspec0}
\end{table}

\begin{table}[h!]
\centering
\small
\begin{tabular}{c c c c c c }
& & \multicolumn{2}{c}{$n_1=n_r=1000$} & \multicolumn{2}{c}{$n_1=n_r=10,000$}\\				
\cmidrule(lr){3-4} \cmidrule(lr){5-6}											
	&		&	$\hat\theta$	&	$\hat\theta_{naive}$	&	$\hat\theta$	&	$\hat\theta_{naive}$	\\
    \hline
    \multirow{6}{*}{$m=5$}	&	bias	&	-0.005	&	-0.208	&	-0.002	&	-0.205	\\
	&	rmse	&	0.229	&	0.279	&	0.071	&	0.213	\\
	&	mae	&	0.183	&	0.236	&	0.056	&	0.205	\\
	&	coverage $99\%$	&	0.993	&		&	0.992	&		\\
	&	coverage $95\%$	&	0.953	&		&	0.953	&		\\
	&	coverage $90\%$	&	0.905	&		&	0.903	&		\\
    \hline
\multirow{6}{*}{$m=10$}	&	bias	&	0.003	&	-0.171	&	-0.003	&	-0.170	\\
	&	rmse	&	0.494	&	0.427	&	0.147	&	0.209	\\
	&	mae	&	0.387	&	0.351	&	0.118	&	0.180	\\
	&	coverage $99\%$	&	0.998	&		&	0.995	&		\\
	&	coverage $95\%$	&	0.964	&		&	0.955	&		\\
	&	coverage $90\%$	&	0.907	&		&	0.910	&		\\
    \hline
\multirow{6}{*}{$m=20$}	&	bias	&	-0.004	&	-0.030	&	0.000	&	-0.029	\\
	&	rmse	&	0.881	&	0.836	&	0.268	&	0.262	\\
	&	mae	&	0.688	&	0.680	&	0.211	&	0.209	\\
	&	coverage $99\%$	&	0.998	&		&	0.983	&		\\
	&	coverage $95\%$	&	0.981	&		&	0.945	&		\\
	&	coverage $90\%$	&	0.952	&		&	0.900	&		
\end{tabular}
\caption{Simulation results for the discrete DGP under a misspecified Gaussian link function with variance $V=1$. Number of simulations $S=5,000$. Bias, RMSE, and MAE are reported as shares of the true value of $\theta(m)$.}
\label{tab-mc-discr-misspec1}
\end{table}

\end{subappendices}

\end{document}